\newtheorem{theorem}{Theorem}
\newtheorem{lemma}{Lemma}
\theoremstyle{remark}
\newtheorem{remark}{Remark}
\definecolor{codegreen}{rgb}{0,0.6,0}
\definecolor{backcolour}{rgb}{0.95,0.95,0.92}
\lstdefinestyle{mystyle}{
    backgroundcolor=\color{backcolour},   
    commentstyle=\color{codegreen},
    keywordstyle=\color{magenta},
    basicstyle=\ttfamily\footnotesize,
}
\newcommand\ket[1]{\vert #1 \rangle}
\newcommand\bra[1]{\langle #1 \vert}
\newcommand\braket[2]{\langle #1 \vert #2 \rangle}
\renewcommand\Re[1]{\mathrm{Re}{#1}}
\newcommand\pdag{{\phantom\dag}}
\newcommand\half{{\textstyle\frac12}}
\begin{document}

\title[Symplectic diagonalization, non-Hermitian Hamiltonian, and symmetries]{Third
quantization for bosons: symplectic diagonalization, non-Hermitian Hamiltonian, and symmetries}

\author{Steven Kim and Fabian Hassler}

\address{JARA Institute for Quantum Information, RWTH Aachen University, 52056 Aachen, Germany}
\ead{steven.kim@rwth-aachen.de}

\begin{abstract}
Open quantum systems that interact with a Markovian environment can be
described by a Lindblad master equation.  The generator of time-translation is
given by a Liouvillian superoperator $\mathcal{L}$ acting on the density
matrix of the system. As the Fock space for a single bosonic mode is already
infinite-dimensional, the diagonalization of the Liouvillian has to be done on
the creation- and annihilation-superoperators, a process called `third
quantization'. We propose a method to solve the Liouvillian for quadratic
systems using a single symplectic transformation.  We show that the
non-Hermitian effective Hamiltonian of the system, next to incorporating the
dynamics of the system, is a  tool to analyze its symmetries.  As an example,
we use the effective Hamiltonian to formulate a
$\mathcal{PT}$-`symmetry' of an open system. We describe how the inclusion of
source terms allows us to obtain the cumulant generating function for observables
such as the photon current.
\end{abstract}

%
%
%
%
%

\section{Introduction}

Third quantization is a technique to solve quantum systems that are coupled to
a Markovian environment.  A Lindblad master equation describes their
time-evolution and can be compactly expressed by a Liouvillian superoperator
$\mathcal{L}$.  The technique was first proposed in~\cite{prosen:08} to solve
fermionic systems and was later extended for bosons \cite{prosen:10}. 
It builds on earlier algebraic approaches in Liouville space, see
e.g.~\cite{ban, arimitsu}. Third quantization has been used to
solve bosons on a lattice \cite{guo:17} and also nonlinear systems by
employing weak symmetries \cite{mcdonald:22}. 
The
method works directly on the annihilation- and creation-superoperators.
Starting from the stationary state, new ladder operators generate `excited
states' with a simple exponential time-evolution. 
Recently, the connection of this
approach to a covariance matrix approach of Gaussian states \cite{barthel:21} and the Keldysh
path-integral formulation of open systems has been highlighted
\cite{kamenev:23,clerk:23}. The identification of eigenmodes with a simple
exponential time-evolution allows one to separate fast from slow degrees of
freedom efficiently. This makes it possible to reduce problems with multiple time-scales
to the slow time-dynamics of the relevant mode at long-times,
see~\cite{hassler:23}.

In this paper, we outline a method to `third quantize' a quadratic bosonic
Liouvillian using a symplectic transformation. This method draws inspiration
from the classical mechanics of the saddle-point dynamic of the bosonic
Keldysh action \cite{kamenev:23}. The main difference to~\cite{prosen:10} is
that we diagonalize the Liouvillian in a single step. This allows us to
extend the method in a straightforward fashion to the task of counting. There,
the time-evolution is not trace-preserving due to the inclusion of source
terms.  We start by introducing a superoperator formalism and a
trace-preserving transformation.  We show how the effective non-Hermitian
Hamiltonian emerges that encodes all the information about the dynamics of the
system.  We briefly discuss exceptional points where the Liouvillian is not
diagonalizable. We provide an explicit formula for the time-evolution operator
in this case.  We proceed by classifying symmetries of a Liouvillian.  We
provide a general recipe to elevate a symmetry of a closed system acting on
the (effective) Hamiltonian to a symmetry of the open system acting on the
Liouvillian. We show how $\mathcal{PT}$-`symmetries' fit in this framework. We
conclude with two instructive examples where the method is applied to open
bosonic systems.

\section{Symplectic diagonalization of complex symmetric matrices}
We will map the task of solving the Lindblad equation onto the diagonalization
of a complex matrix using a symplectic transformation. In the applications, the
symplectic form arises due to the bosonic commutation relation. We prepare the
main theorem by two lemmas.

The first lemma is a restatement of the diagonalization of a complex matrix:
\begin{lemma}\label{lemma1}
Given the standard symplectic form  ($J^T = J^{-1} = - J$)
\begin{equation}\label{eq:j}
	J = \pmatrix{
		0 & I_n \cr
		-I_n & 0
	}\,,
\end{equation}
where $I_n$ is the identity matrix  of size $n$, a symmetric matrix $L \in
\mathbb{C}^{2n \times 2n}$, $L=L^T$, with $J^{-1} L$ diagonalizable, can be
brought into the normal form
\begin{equation}
	PLQ  
	=\pmatrix{
	0 & \Lambda \cr
	\Lambda & 0};
\end{equation}
where $\Lambda$ is a diagonal matrix with entries $\lambda_1, \dots, \lambda_n$
and $P, Q\in \mathrm{GL}(2n, \mathbb{C}) $ are (invertible)  transition matrices with
\begin{equation}
	PJQ = J\,.
\end{equation}
\end{lemma}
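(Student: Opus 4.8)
The plan is to reduce the statement to the spectral theory of the matrix $W := J^{-1}L$, which the hypothesis assumes to be diagonalizable. The first observation is that $W$ is infinitesimally symplectic (a Hamiltonian matrix): since $JW = JJ^{-1}L = L$ and $L = L^{T}$, the product $JW$ is symmetric, which is exactly the defining relation of the symplectic Lie algebra. Equivalently, from $JW = (JW)^{T} = W^{T}J^{T} = -W^{T}J$ one reads off $W = -J^{-1}W^{T}J$, so $W$ is similar to $-W^{T}$. Because $W^{T}$ and $W$ share their spectrum, this shows that the eigenvalues of $W$ occur in pairs $\{\lambda,-\lambda\}$ with equal algebraic multiplicities; as $W$ is diagonalizable, the geometric multiplicities match as well.

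Next I would control how $J$ couples the eigenspaces. For eigenvectors $Wv=\lambda v$ and $Ww=\mu w$, the identity $JW=-W^{T}J$ gives $\mu\,v^{T}Jw = v^{T}JWw = -(Wv)^{T}Jw = -\lambda\,v^{T}Jw$, so that $v^{T}Jw=0$ whenever $\lambda+\mu\neq 0$. Since $J$ is nondegenerate, the eigenspace $E_{\lambda}$ must therefore pair nondegenerately with $E_{-\lambda}$; in particular the kernel $E_{0}$ carries a nondegenerate antisymmetric form and hence is even-dimensional. This is the step I expect to be the crux: it guarantees that the spectrum can be split into two ordered halves $(\lambda_{1},\dots,\lambda_{n})$ and $(-\lambda_{1},\dots,-\lambda_{n})$, choosing one representative from each pair and dividing the (even-dimensional) zero eigenspace evenly between the two blocks.

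With the pairing in hand, the construction is essentially bookkeeping. The matrix $M := LJ^{-1} = JWJ^{-1}$ is similar to $W$, hence diagonalizable with the same $\pm$-paired spectrum, so there is an invertible $P$ with $PMP^{-1} = \mathrm{diag}(\Lambda,-\Lambda)$, where the eigenvalues are arranged in the order fixed above. I would then simply define $Q := J^{-1}P^{-1}J$, which makes $PJQ=J$ hold automatically. It remains to verify the normal form: a direct computation gives $PLQ = PLJ^{-1}P^{-1}J = (PMP^{-1})J = \mathrm{diag}(\Lambda,-\Lambda)\,J$, and multiplying out the block matrices turns $\mathrm{diag}(\Lambda,-\Lambda)\,J$ into $\pmatrix{0 & \Lambda \cr \Lambda & 0}$, as required. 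The only genuine content beyond routine linear algebra is the symplectic pairing argument of the second paragraph; once the $\pm$ structure and the even-dimensionality of the kernel are established, existence of $(P,Q)$ follows immediately and the relation $PJQ=J$ costs nothing, since $Q$ is defined through $P$.
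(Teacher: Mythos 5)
Your proposal is correct and takes essentially the same route as the paper: both proofs reduce the lemma to diagonalizing the Hamiltonian matrix $J^{-1}L$ (you work with the similar matrix $LJ^{-1}$), use the fact that $L=L^T$ forces the spectrum into $\pm\lambda$ pairs, and then define the second transition matrix by conjugating the first with $J$ so that $PJQ=J$ holds automatically. The symplectic-pairing paragraph you single out as the crux is actually dispensable for this lemma --- the even dimension of $E_{0}$ already follows by counting, since the nonzero eigenvalues pair off with equal multiplicities inside a $2n$-dimensional space, and a diagonalizable matrix admits any ordering of its eigenbasis --- although that nondegenerate pairing between $E_{\lambda}$ and $E_{-\lambda}$ does become genuinely relevant later, for the normalization $\bi{s}(-\lambda_i)J\bi{s}(\lambda_j)=\delta_{ij}$ underlying the symplectic form of theorem~\ref{theorem1}.
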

\begin{proof} 
The matrix $A = J^{-1}L$ can be diagonalized by a transition
matrix $Q$ and a diagonal matrix $D$ with $Q^{-1} A Q = D$. Note that $A = - J^{-1}
A^T J$ such that $A$ is similar to $-A^T$. Thus, the eigenvalues of $A$
come in pairs $\pm \lambda_i$.  Due to this, we can choose
\begin{equation}
	D = \pmatrix{ -\Lambda & 0 \cr 
	               0    & \Lambda}.
\end{equation}
With $P=  J Q^{-1} J^{-1}$, we have $ J^{-1} P L Q = D$ and the result
follows by
\begin{equation}
	J D = \pmatrix{0 & \Lambda \cr
	  \Lambda & 0}\,.
\end{equation}
\end{proof}
\begin{remark}
The column vectors $\bi q_i$ of $Q$ together with the eigenvalues $\lambda_i$
can be obtained by solving the generalized eigenvalue problem
\begin{equation}
  L \bi q_i = \lambda_i J \bi q_i\,.
\end{equation}
\end{remark}
\begin{remark}
If $A=J^{-1}L$ is not diagonalizable, the theorem can be adjusted such that
$\Lambda$ involves Jordan blocks and 
\begin{equation}
	D = \pmatrix{ -\Lambda & 0 \cr
	0   & \Lambda^T }\,.
\end{equation}
We give an example of this case below.
\end{remark}
The following technical lemma will be the central ingredient of the theorem:
\begin{lemma}\label{lemma2}
Given a invertible transition matrix $T \in \mathrm{GL}(2n, \mathbb{C}) $ and an
arbitrary matrix $A\in \mathbb{C}^{2n \times 2n}$ with 
\begin{equation}\label{eq:comm}
 A = T^T A T^{-1}\,,
\end{equation}
we can find $B \in \mathrm{GL}(2n, \mathbb{C}) $ such that $B^2 = T$ ($B$ is
a `square root' of $T$) and
\begin{equation}\label{eq:comm_sqrt}
 A = B^T A B^{-1}\,.
\end{equation}
\end{lemma}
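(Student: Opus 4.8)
The plan is to construct $B$ as a holomorphic square root of $T$ through the Riesz--Dunford functional calculus, and to argue that the intertwining relation encoded in the hypothesis is automatically inherited by any analytic function of $T$. First I would rewrite equation~(\ref{eq:comm}) in the equivalent intertwining form $AT = T^T A$, obtained by multiplying on the right by $T$. Iterating this yields $A T^k = (T^T)^k A = (T^k)^T A$ for every $k \ge 0$, since transposition reverses products and commutes with taking powers. Passing to resolvents, for any $z$ outside the spectrum of $T$ one has $A(zI - T) = (zI - T^T)A$, and therefore
\[
A (zI - T)^{-1} = (zI - T^T)^{-1} A ,
\]
where I use that $T$ and $T^T$ share the same spectrum.

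Next I would define $B$. Because $T$ is invertible, $0$ is not an eigenvalue, and since the spectrum is a finite set I can choose a ray emanating from the origin that avoids all eigenvalues and use it as a branch cut; on the resulting neighbourhood $U$ of $\mathrm{spec}(T)$ the function $f(z) = \sqrt{z}$ is holomorphic and single-valued with $f(z)^2 = z$. I then set
\[
B = \frac{1}{2\pi i} \oint_{\Gamma} \sqrt{z}\,(zI - T)^{-1}\, dz ,
\]
with $\Gamma \subset U$ a contour enclosing $\mathrm{spec}(T)$. Multiplicativity of the functional calculus gives $B^2 = f(T)^2 = T$, and $B$ is invertible because $B^2 = T$ is (equivalently, because $f$ does not vanish on $U$), so $B \in \mathrm{GL}(2n,\mathbb{C})$.

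It remains to transfer the intertwining to $B$. Inserting the resolvent identity above into the contour integral, I would pull $A$ through the integral to obtain
\[
A B = \frac{1}{2\pi i} \oint_{\Gamma} \sqrt{z}\, (zI - T^T)^{-1}\, dz \; A = f(T^T)\, A .
\]
Finally $f(T^T) = f(T)^T = B^T$, which holds because $\mathrm{spec}(T^T) = \mathrm{spec}(T)$, so the same contour computes both, and because transposition commutes with the integral while sending $(zI-T)^{-1}$ to $(zI-T^T)^{-1}$. This gives $AB = B^T A$, that is $A = B^T A B^{-1}$, which is equation~(\ref{eq:comm_sqrt}).

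I expect the only genuine subtlety to be the existence of a single-valued holomorphic branch of the square root on a neighbourhood of the spectrum: this is precisely where invertibility of $T$ enters (to exclude the origin) together with finiteness of the spectrum (to place the branch cut). Everything else---the resolvent intertwining and the identity $f(T^T) = f(T)^T$---is a routine extension of the polynomial case $A p(T) = p(T)^T A$ via the functional calculus. An alternative route through the Jordan normal form of $T$, building a square root block by block, would also work, but it obscures the clean way in which the hypothesis propagates to $B$.
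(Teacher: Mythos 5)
Your proof is correct and follows essentially the same route as the paper: both hinge on choosing a single-valued holomorphic branch of $\sqrt{z}$ on a neighbourhood of $\mathrm{spec}(T)$ (possible precisely because $T$ is invertible, so $0$ is excluded) and on the observation that the intertwining $AT = T^T A$ is inherited by any function of $T$. The only difference is the technical realization of $f(T)$: the paper takes $B = p_f(T)$ for a polynomial $p_f$ matching $\sqrt{z}$ on the spectrum (citing Gantmacher), so the transfer reduces to the purely algebraic iterated identity $A\,p(T) = p(T)^T A$, whereas you construct $B$ via the Riesz--Dunford contour integral and transfer through the resolvent identity $A(zI-T)^{-1} = (zI-T^T)^{-1}A$ together with $f(T^T) = f(T)^T$ --- two equivalent faces of the same functional calculus.
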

\begin{proof}
From (\ref{eq:comm}), it follows by iteration that
\begin{equation}
 A = (T^k)^T A (T^k)^{-1}, \qquad k \in \mathbb{N}
\end{equation}
and also that $A = p(T)^T A\, p(T)^{-1}$ for an arbitrary polynomial $p(x)$.
The function $f(x) = \sqrt{x}$ evaluated on $T$ only depends on the value on
the spectrum of $T$, see~\cite{gantmacher:60} for details. As the
spectrum of $T$ does not involve $0$, we can choose a single-valued branch of
$f(x)$ and find a polynomial $p_f(x)$ that has the same value as $f(x)$ on the
spectrum of $T$. The matrix $B= p_f(T)$ fulfills the requirements $B^2 =T$ and
(\ref{eq:comm_sqrt}) of the lemma.  
\end{proof}

Now, we have gathered the necessary ingredients for our main theorem.
\begin{theorem}\label{theorem1}
Given the conditions of lemma~\ref{lemma1}, the symmetric matrix $L$
can be brought into the normal form 
\begin{equation}\label{eq:normal_form}
	S^TLS =\pmatrix{
	0 & \Lambda \cr
	\Lambda & 0}
\end{equation}
by a symplectic transition matrix $S$ fulfilling
\begin{equation}
S^TJS = J\,.	
\end{equation}
\end{theorem}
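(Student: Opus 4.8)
The plan is to use the matrix $Q$ that diagonalizes $A=J^{-1}L$ supplied by the proof of Lemma~\ref{lemma1}, so that $Q^{-1}AQ=D$ with $D=\mathrm{diag}(-\Lambda,\Lambda)$, and then to repair the congruence $Q^{T}(\cdot)Q$ by a single invertible correction $C$, whose existence is guaranteed by Lemma~\ref{lemma2}. Since $L=JA$, one immediately has $Q^{T}LQ=(Q^{T}JQ)\,D$, while the target matrix on the right-hand side of (\ref{eq:normal_form}) is exactly $M:=JD$. Writing $\Omega:=Q^{T}JQ$ (antisymmetric and invertible), the goal reduces to finding an invertible $C$ such that, with $S=QC$, one has $C^{T}\Omega C=J$ and $C^{T}\Omega D\,C=M$.

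The decisive structural point is that a single $C$ can satisfy both equations provided it commutes with $D$: then $C^{T}\Omega D\,C=(C^{T}\Omega C)\,D$, so the second equation collapses to the first together with $M=JD$. I would therefore first extract the commutation structure. Because $\Omega D=Q^{T}LQ$ is symmetric and $D^{T}=D$, one gets $\Omega D=-D\Omega$; because $M=JD$ is symmetric one gets $JD=-DJ$. Both $\Omega$ and $J$ thus anticommute with $D$, whence $R:=J^{-1}\Omega$ commutes with $D$. The antisymmetry of $\Omega$ translates into the conjugation relation $R^{T}=JRJ^{-1}$, which is precisely the hypothesis (\ref{eq:comm}) of Lemma~\ref{lemma2} for the pair $(T,A)=(R^{-1},J)$, since one checks $R^{-T}JR=J$.

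With this in place I would invoke Lemma~\ref{lemma2} to obtain $C:=p_{f}(R^{-1})$, a square root of $R^{-1}$ with $C^{2}=R^{-1}$ and $J=C^{T}JC^{-1}$, i.e.\ $C^{T}J=JC$. Being a polynomial in $R$, this $C$ automatically commutes with $D$. The verification is then a one-line computation, using $\Omega=JR$, $C^{T}J=JC$, and $[C,R]=0$: $S^{T}JS=C^{T}\Omega C=C^{T}JRC=(C^{T}J)RC=J\,CRC=J\,C^{2}R=JR^{-1}R=J$, and $S^{T}LS=C^{T}(\Omega D)C=(C^{T}\Omega C)\,D=JD=M$, which is exactly (\ref{eq:normal_form}).

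The only real obstacle I anticipate is a degenerate spectrum of $A$. If the $\pm\lambda_{i}$ were all distinct one could simply diagonalize $\Omega$ into $J$ by a $D$-block-diagonal rescaling, but repeated or coinciding eigenvalues obstruct any such explicit choice. This is exactly what Lemma~\ref{lemma2} is designed to circumvent: the functional-calculus square root $p_{f}(R^{-1})$ is a genuine polynomial in $R$, so it commutes with $D$ and inherits the relation $C^{T}J=JC$ with no regularity assumption on the eigenvalues, the invertibility of $R=J^{-1}\Omega$ (hence $0\notin\mathrm{spec}(R)$) being all that is required to select a single-valued branch of the square root.
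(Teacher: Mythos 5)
Your proof is correct and is essentially the paper's argument transported into the $Q$-frame: your $R=J^{-1}Q^{T}JQ$ is exactly the paper's transfer matrix $T=QP^{-T}$ conjugated by $Q$ (with $P=JQ^{-1}J^{-1}$ one checks $T=QRQ^{-1}$), so your $S=QC=Q\sqrt{R^{-1}}$ coincides with the paper's $S=B^{-1}Q$, $B^{2}=T$, up to the choice of branch of the square root. The only organizational difference is that the paper applies lemma~\ref{lemma2} so that the congruence $X=B^{T}XB^{-1}$ holds simultaneously for $X=L$ and $X=J$, whereas you invoke the lemma only for $J$ and recover the $L$-congruence from $[C,D]=0$ together with the anticommutation relations $\Omega D=-D\Omega$ and $JD=-DJ$ --- a valid, slightly leaner bookkeeping of the same idea.
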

\begin{proof}
With lemma~\ref{lemma1}, the symmetry of $L$, and the skew-symmetry of $J$, it
follows that
\begin{equation}\label{eq:pmq}
PLQ = Q^TLP^T = \pmatrix{
	0 & \Lambda \cr
\Lambda & 0} \qquad \mathrm{and} \qquad P J Q =Q^T J P^T = J\,.  \label{eq:Atilde}
\end{equation} 
This implies that 
\begin{equation}\label{eq:comm_m}
	L = T^T L T^{-1}\qquad \mathrm{and} \qquad J = T^T J T^{-1}\,,
\end{equation}
where $T=Q(P^{-1})^{T}= Q P^{-T}$.  Using lemma~\ref{lemma2}, we obtain $B$
with $B^2 = T$ and $B$ fulfilling the properties of (\ref{eq:comm_sqrt}).
Setting $S =B^{-1} Q$, we obtain ($X$ is either $L$ or $J$)
\begin{equation}
	S^T X S  = Q^T B^{-T} X B^{-1} Q \stackrel{(\ref{eq:comm_m})}= Q^T
	(B^{-T})^{2}  X Q= Q^T
	T^{-T}   X Q = P X Q \,.
\end{equation}
The result follows by comparing $S^T L S = P L Q$ and $S^T J S = PJ Q$ with
(\ref{eq:pmq}).

\end{proof}
\begin{remark}\label{remark3}
The normal form is unique up to reordering of the $\lambda_i$ or replacing
$\lambda_i$ by its partner $-\lambda_i$. In particular, for each $i\leq n$,
the normal form is invariant under exchange $\bi s_i \mapsto -\bi s_{i+n}$,
$\bi s_{i+n} \mapsto  \bi s_i$ of the $i$-th and $(i+n)$-th column of $S$
while inverting the sign of $\lambda_i \mapsto - \lambda_i$. In the
application for open systems, the sign of $\lambda_i$ will determine the
stability of the system. We will discuss how to fix the sign ambiguity for the
elements of $\Lambda$ below.
\end{remark}
\begin{remark}
For an alternative proof using symplectic geometry induced by the form $\omega(\bi{x}, \bi{y})=\bi{x}^T
J\bi{y}$ see~\cite{laub:74}.
\end{remark}
\begin{remark}
Note that the presented theorem is different from the better-known Williamson's
theorem for the diagonalization of a real, symmetric, positive matrix with a
symplectic transition matrix, see e.g.~\cite{nicacio:21}.
\end{remark}
\section{Solving a bosonic Lindblad master equation}
\subsection{Superoperator formalism and non-Hermitian Hamiltonian}
Open quantum systems interact with their environment, e.g. via emission or
absorption of photons. For Markovian environments, the Lindblad master
equation governing the dynamics is given by
\begin{equation}
	\dot{\rho} = -i[\mathcal{H},\rho] +  \sum_\mu\Bigl[  \mathcal{J}_\mu \rho
		\mathcal{J}_\mu^\dag - \half
		(\mathcal{J}_\mu^\dag \mathcal{J}_\mu \rho + \rho \mathcal{J}_\mu^\dag
	\mathcal{J}_\mu)\Bigr]\,,
\end{equation}
which describes the time evolution of the density matrix $\rho$ of the system.
The Hamiltonian $\mathcal{H}$ describes the closed system evolution and the
Lindblad jump operators $\mathcal{J}_\mu$ incorporate the coupling to the
environment.  The equation states that the time-evolution of the density
matrix is generated by the Liouvillian superoperator $\mathcal{L}$ with
$\dot{\rho} = \mathcal{L}\rho$.  The general solution of this equation is
given by the time-evolution operator $\exp(\mathcal{L}t)$. In order to
understand the dynamics of the system, we need to obtain the eigenvalues
$\lambda$ of the Liouvillian $\mathcal{L}$. In particular, knowledge of the
eigenvalues allows us to identify instabilities of the system with $\Re(\lambda) >
0$, where the dynamics diverge exponentially, or dissipative phase transitions
when $\Re(\lambda) \rightarrow 0^-$ \cite{minganti:18}.

For exact solvability, we assume that the Hamiltonian is quadratic in the
bosonic ladder operators $a_i, i=1,\dots, m$  and that the bath operators are
linear in ladder operators acting on the Fock space $\mathcal{F}$ of $m$
bosonic modes.  This means that
\begin{eqnarray}
	\mathcal{H} = \sum_{i,j} \Bigl[a_i^\dag h_{ij} a_j + \half a_i \Delta_{ij} a_j
	+ \half a^\dag_i (\Delta^\dag)_{ij} a^\dag_j \Bigr] + \sum_i \Bigl(\alpha_i^* a^\pdag_i + \alpha_i a^\dag_i \Bigr), \\ 
	\mathcal{J}_\mu = \sum_i \Bigl( v_{\mu,i}a_i + w_{\mu,i}a^\dag_i \Bigr)
	+\beta_\mu.
\end{eqnarray}
The matrix $h\in \mathbb{C}^{m\times m}$ is Hermitian and it is possible to
choose $\Delta \in \mathbb{C}^{m\times m}$ as a symmetric matrix.  The vectors
$\bi{v}_\mu, \bi{w}_\mu \in \mathbb{C}^{m}$  describe the interaction with the
environment.  Emission of photons is described by $\bi{v}_\mu$  while
$\bi{w}_\mu$ relates to absorption processes. The coherent drive $
\alpha_i\in\mathbb{C}$ and the parameters $\beta_\mu\in \mathbb{C}$ couple
linearly to the bosonic ladder operators and thus displace  the vacuum, see
below.

Since the operators in the Lindblad equation act on both sides of the density
matrix, we go over to a superoperator formalism.  We define a set of
superoperators by $\mathcal{O}_c = (\mathcal{O}_+ + \mathcal{O}_-)/2$
and $\mathcal{O}_q = (\mathcal{O}_+ - \mathcal{O}_-)$ where
$\mathcal{O}_+ \rho = \mathcal{O}\rho$ and $\mathcal{O}_- \rho = \rho
\mathcal{O}$. The subscript `c' and `q' denoting classical and quantum are
derived from the language used in the Keldysh path-integral description
\cite{clerk:23,kamenev:05}.  The relevant commutation relations are
$[a^\pdag_{c,i} , a_{q,j}^\dag] = [a^\pdag_{q,i} , a_{c,j}^\dag] =
\delta_{ij}$ while the remaining commutators vanish. The superoperators act
on the `doubled' Fock-space $\mathcal{F}^2 = \mathcal{F}_+ \otimes
\mathcal{F}_-$  in which the density matrix is vectorized such that
$\mathcal{L}$ acts as a linear map, see below. 

Using this definition and defining the basis $\bi{b} =(\bi b_c, \bi b_q)^T =
(\bi{a}^\pdag_c, \bi{a}_c^\dag, \bi{a}_q^\dag, -\bi{a}^\pdag_q)^T$, we can
express the Liouvillian by a complex symmetric matrix $L \in
\mathbb{C}^{2n\times 2n}$ with $n=2m$.  The Liouvillian reads $\mathcal{L} =
\frac12 \bi{b}^T L \bi{b} + \bm{\eta} \cdot \bi b_q+ L_0$ with
\begin{eqnarray} \label{eq:liouvillian}
	L &= \pmatrix{0 & -i H^T_{\mathrm{eff}} Z \cr 
	-i Z H_\mathrm{eff}   &  N}\\ \bm{\eta} &= \pmatrix{
\bi z \cr \bi z^* }, \quad \bi z = -i \bm\alpha + \frac{1}{2}\sum_\mu
\pmatrix{\beta^*_\mu \bi w_\mu - \beta_\mu \bi v^*_\mu}  \in
\mathbb{C}^m\,,
\end{eqnarray}
and $L_0 =\frac12 \tr(V-W)$.  We define the `dissipation matrices' $V$, $W$,
and $U \in \mathbb{C}^{m\times m}$ by the sum of dyadic products
\begin{equation}
V= V^\dag = \sum_\mu \bi{v}_\mu  \bi{v}_\mu^\dag , \quad
W  = W^\dag = \sum_\mu \bi{w}_\mu  \bi{w}_\mu^\dag, \quad
U = -\sum_\mu \bi{v}_\mu  \bi{w}_\mu^\dag.
\end{equation}

We note that $L$ has a block  structure. The zero matrix in the upper-left
corner derives from the trace preservation of the Lindblad equation
\cite{prosen:10,kamenev:23}.  This block would connect the `c'-operators to
each other.  Since these terms are absent, in each term of the Liouvillian
there is at least one `q'-operator and the trace preservation follows from
$\Tr(\mathcal{O}_q \rho) = \Tr(\mathcal{O}\rho - \rho \mathcal{O}) = 0$.  It
is the trace preservation that allows the diagonalization with the method of
third quantization. In the superoperator formalism, the trace over the Fock
space $\mathcal{F}$ is a distinguished left-eigenvector $\bra{ 0 }\cdots
\equiv \Tr(\cdots)$ from which the rest of the spectrum can be obtained by
ladder operators, see below.

Due to the zero matrix in the upper-left block, the lower-right block matrix
$N \in \mathbb{C}^{2m\times 2m}$, given by
\begin{equation} \label{eq:noise}
N = \frac12 \pmatrix{U^\dag+U^* & W+V^* \cr
W^*+V & U^T+U },
\end{equation}
does not influence the spectrum $\lambda$ of the Liouvillian. In fact, it
encodes the noise, see \cite{kamenev:23}. The off-diagonal blocks encode the
dynamics of the system. To connect to the literature about non-Hermitian
Hamiltonians, we have split off the matrix
\begin{equation}
Z = \pmatrix{I_{m} & 0 \cr
0 & -I_{m}}
\end{equation}
which encodes the bosonic commutation relations $[a_i,a^\dag_j]=\delta_{ij}$
\cite{colpa:78}.  The remaining part is the non-Hermitian Hamiltonian
\begin{equation} \label{eq:Heff}
H_\mathrm{eff} = \pmatrix{h + \frac{i}{2}(W-V^*) & \Delta^* -\frac{i}{2} (U^\dag-U^*) \cr
\Delta + \frac{i}{2}(U^T - U) & h^* - \frac{i}{2}(W^*-V) }.
\end{equation}
It can be decomposed as $H_\mathrm{eff} = H + \frac{i}{2}\Gamma$, with $H$ and
$\Gamma$ Hermitian. The first part $H$ is the Hamiltonian of the closed system
given by $\mathcal{H} = \frac12 {\bi{b}^\prime}^\dagger H \bi{b}^\prime$ with
$\bi{b}^\prime = (\bi{a}, \bi{a}^\dagger)^T$.  The second part $\Gamma$
encodes the dissipative dynamics due to the Lindblad jump operators
$\mathcal{J}_\mu$.  Non-Hermitian Hamiltonians which encode only the dynamical
part of the Lindblad time-evolution, have recently received a lot of interest
in the context of non-Hermitian topology \cite{bergholtz:21}.

We continue by showing how to diagonalize the Liouvillian $L$. As we are in
phase-space, it is important to conserve the bosonic structure given by the
commutation relation
\begin{equation}
J_{ij} = \left[
\bi{b}_i
,
\bi{b}_j
\right]
\end{equation}
where $J$ is the standard symplectic form of (\ref{eq:j}). Thus, $L$ and
$J$ are in the form of theorem~\ref{theorem1} and we can find a symplectic
transformation $S$ that brings $L$ in the normal form of (\ref{eq:normal_form})
with the generalized eigenvalues $\lambda_i$ forming the diagonal matrix
$\Lambda$.

\subsection{Fixing the normal form}

As the eigenvalues $\lambda_i$ determine the dynamics and the stability of the
system, we need to fix the sign ambiguity $\pm \lambda_i$.  One of the columns
$\bi s_i$ and $\bi s_{i+n}$, $i\leq n$, of $S$ is of the form
$(\tilde{\bi{s}}_i,0\cdots0)^T$, i.e. ending in $n$ zeros, see
\cite{prosen:10}. If $\bi s_{i+n}$ ends in the zeros, we exchange $\bi s_i
\mapsto -\bi s_{i+n}$, $\bi s_{i+n} \mapsto \bi s_i$ while inverting the sign
of $\lambda_i \mapsto - \lambda_i$, according to remark~\ref{remark3}.  Note
that the normalization of the eigenvectors is given by $\bi s(-\lambda_i)J\bi
s(\lambda_j)=\delta_{ij}$. In this way, we fix the signs of $\lambda_i$ while
bringing the symplectic matrix in the block form
\begin{equation}\label{eq:s_block}
	S = \pmatrix{ S_1& S_2 \cr 0 & S_3}\,.
\end{equation}
The condition $S^TJ S =  J$ implies that $S_3^T =S_1^{-1}$ and $S_2^T S_3 =
S_3^T S_2$.  In the following, we will always assume that the normal form is
\emph{fixed} in this way. In the following, we call the (generalized)
eigenvalues $\lambda_i$ the (symplectic) spectrum of the Liouvillian $L$.

Having made sure that the columns $\bi s_i$, $i\leq n$, end in 0, the
generalized eigenvalue problem $L \bi s_i = \lambda_i J\bi s_i$ is equivalent
to
\begin{equation}\label{eq:heff_ev}
H_\mathrm{eff} \tilde{\bi{s}}_i = i\lambda_i Z \tilde{\bi{s}}_i \,,
\end{equation}
i.e. the characteristic values $\lambda_i$ are, up to multiplication by $i$
which corresponds to a rotation by $\pi/2$ in the complex plane, the
eigenvalues of the non-Hermitian effective Hamiltonian $H_\mathrm{eff}$. While
we promote the diagonalization with the symplectic transformation of
theorem~\ref{theorem1} with a subsequent fixing of the normal form, there is
the alternative approach of \cite{prosen:10}. They proceed by solving
(\ref{eq:heff_ev}) followed up by a solution of a Lyapunov equation to obtain
the full eigenvectors $\bi s_i$.

The transformation to the normal form introduces new bosonic operators
$u_i$ and $v_i$ defined by
\begin{equation}
	\bi{b} = S  \pmatrix{
		u_1- \eta_1'/\lambda_1\cr \vdots\cr u_n-\eta_n'/\lambda_n \cr v_1 \cr \vdots\cr v_n
	}\,, 
\end{equation}
which implies $\bi b_c = S_1 (\bi u - \Lambda^{-1}\bm{\eta}') + S_2 \bi v$ and
$\bi b_q = S_3 \bi v$. The linear terms $\bm{\eta}$ shift the vacuum as
parameterized by the transformed vector $\bm{\eta}'= S_3^T \bm{\eta} =
S_1^{-1} \bm{\eta}$.

Due to the symplectic nature of the transform, the operators obey $[u_i,
v_j]=\delta_{ij}$ while all other commutators vanish. Thus, $u_i$ are
annihilation  and $v_i$ creation operators.  Since the Liouvillian $L$ is not
a Hermitian operator, we do not have the property that $u_i$ is the adjoint of
$v_i$.  Importantly, the new creation operators $v_i$ are solely given by a
linear combination of $\bi{a}_{q}$ and $\bi{a}_{q}^\dag$. Because of this, the
$v_i$ take both the role of creation operators as well as the preservation of
the trace, see the discussion above.

Written with the new operators $u_i$ and $v_i$, the Liouvillian assumes the form
\begin{equation}
	\mathcal{L} = \sum_i \lambda_i v_i u_i   
\end{equation}
where the additional constant $\frac12 \sum_i \lambda_i + L_0$ vanishes in
accordance with the trace preservation.\footnote{We note that $\sum_i \lambda_i
= \tr(-iZH_\mathrm{eff})$ which evaluates to $ \tr(W-V) = -2L_0$ using
(\ref{eq:Heff}). }

\subsection{Fock space of the superoperator formalism}

The ladder operators $v_i, u_i$ introduce a Fock-like structure. In
particular, the operator $v_i u_i$ is a number operator. Given any eigenstate
$\ket{n_i}$ the eigenvalues can be increased (decreased) by $1$ by the
application of $v_i$ ($u_i$) due to the ladder structure $[v_i u_i, v_i ]= v_i$
($[v_i u_i, u_i]=-u_i$). To obtain the spectrum from the ladder
property, we need a distinguished state from which the rest of the states can
be obtained by application of the creation operators.  In the standard
discussion of the algebraic solution of the quantum harmonic oscillator the
vacuum plays this role. Here, the trace $\bra{0}$ plays the
distinguished role as it is a left eigenvector to the eigenvalue $0$, see
\cite{prosen:10}.

Starting from the state $\bra{0}$ with $\bra{0} v_i u_i = 0$, for $i\leq n$,
we can obtain the remaining state by application of $u_i$ to the left. Note
that the annihilation operator $u_i$ acts as a creation operator when acting to
the left. In this way, we obtain the remaining states, $n_i \in \mathbb{N}_0$,
\begin{equation}
	\bra{n_1, n_2, \dots } =  \bra{0}
	\prod_{i=1}^n\frac{u_i^{n_i}}{\sqrt{n_i!}}
\end{equation}
which are left eigenstates of $v_i u_i$ to the eigenvalue $n_i$. Because of
this, the spectrum of the Liouvillian $\mathcal{L}$ is given by
\begin{equation}
	\lambda = \sum_{i=1}^n \lambda_i n_i\,.
\end{equation}

To every left eigenvector $\bra{n_1, n_2, \dots }$ there is a corresponding
right eigenvector
\begin{equation}
  \ket{n_1, n_2, \dots } = 
  \prod_{i=1}^n\frac{v_i^{n_i}}{\sqrt{n_i!}} \ket{0}\,,
\end{equation}
which is not its Hermitian adjoint.  Importantly, the right eigenvector
$\ket{0}$ to $\lambda=0$ corresponds to the stationary density matrix $\rho_s
\equiv \ket{0}$, as  $\mathcal{L} \rho_{s} = \dot{\rho_{s}} = 0$.  The ladder
operators act on the eigenstates $\ket{n_1, n_2, \dots } = \bigotimes_i
\ket{n_i}$ according to $u_i \ket{n_i} = \sqrt{n_i} \ket{ n_i-1}$ and $v_i
\ket{n_i} = \sqrt{n_i+1} \ket{n_i+1}$ \footnote{It would be also possible to
chose a different `normalization' of the operators, e.g. $u_i \ket{n_i} =
\ket{n_i-1}$ and $v_i \ket{n_i} = (n_i+1) \ket{n_i+1}$. With the present
choice, the matrices implementing $u_i$ and $v_i$ in the Fock-basis are
adjoints of each other.}.  The action on the left eigenstates is given by
$\bra{n_i}u_i = \bra{n_i+1}\sqrt{n_i+1}$, $\bra{n_i}v_i =
\bra{n_i-1}\sqrt{n_i}$.  These states form a complete bi-orthonormal set with
$\braket{n_1', n_2', \dots }{n_1, n_2, \dots } = \Pi_i \delta_{n_i', n_i}$ and
$I_{\mathcal{F}^2} = \sum_{n_1, n_2, \dots} \ket{n_1, n_2, \dots }\bra{n_1,
n_2, \dots }$.

Using the completeness, the time evolution operator of the Lindblad equation
equals
\begin{equation}\label{eq:time_evolv}
\exp(\mathcal{L}t) = \sum_{n_1, n_2, \dots} e^{\lambda t} \ket{n_1, n_2, \dots
}\bra{n_1, n_2, \dots }
\end{equation}
with $\lambda = \sum_i \lambda_i n_i$.  This means that states with large
$\lambda$ (large $n_i$ or $\lambda_i$) decay faster than others. This insight
makes it possible to adiabatically eliminate the fast modes, see \cite{hassler:23}
for an example.

\subsection{Jordan normal-form}

Generally, it might happen that the matrix $J^{-1}L$ is not diagonalizable,
meaning that it can only be brought into its Jordan normal-form.  For the
following, we will discuss the simplest case for a system containing a single
boson where the eigenvalues and eigenvectors coalesce. In this case,
the Liouvillian equals $\mathcal{L} = \mu (v_1u_1 + v_2u_2) + \nu v_2 u_1$,
where the first part is diagonal and $v_2u_1$ is nilpotent.  The following
procedure can be straightforwardly extended to more general cases involving
larger Jordan blocks.

While the algebraic eigenvalues of $\mathcal{L}$ stays the same, the
geometrical spectrum that determines the dynamics of the system changes.  If
$\mathcal{L}$ would be diagonalizable with degenerate eigenvalues, there would
be a degeneracy of $n+1$ for each eigenvalue $\lambda = \mu (n_1 + n_2)$ with
$n_1 + n_2 = n$.  Because a complete decomposition of $\mathcal{L}$ by
eigenstates is not possible anymore, additional generalized eigenstates are
needed. Acting with the Liouvillian on the Fock-states, we obtain
\begin{equation}
	\mathcal{L} \ket{n_1, n_2} = \mu (n_1 + n_2)\ket{n_1, n_2}+ \nu \sqrt{n_1(n_2+1)}
	\ket{n_1-1, n_2+1} \,.
\end{equation}
We see that, for fixed $n = n_1 +n_2$, only the state $\ket{0,n}$ with $n_1=0$
is an eigenstate of the Liouvillian.  The $n$ remaining states $\ket{n_1, n
-n_1}$ with $n_1 = 1,\dots, n$ are generalized eigenstates.  The time
evolution operator reads
\begin{equation}\label{eq:time_evolv_jordan}
	e^{\mathcal{L}t} \!=\!\!\! \sum_{n_1, n_2 \geq 0\atop 0\leq m \leq n_1}  
	\sqrt{{n_1 \choose m} {n_2 + m \choose m}} \,(\nu t)^m e^{\mu (n_1 +
	n_2) t} \ket{n_1-m, n_2+m}\bra{n_1, n_2}\,,
\end{equation}
where $m$ runs over the set of generalized eigenstates to a fixed $n$. Note
that the time evolution is not a simple sum of exponential terms.  This is due
to the fact that the eigenstates do not form a complete basis anymore and
generalized eigenstates are needed that change the dynamics because of the
nilpotent operator $v_2u_1$.

\subsection{Counting statistics}

When solving the Lindblad equation, not only the stationary state $\rho_s$ and
the time evolution $\exp(\mathcal{L} t)$ are of interest. Often, one is
interested in the statistics of some observables $\mathcal{O}$, such as the
photon current.  We can access the statistics by including a source term in the
Liouvillian superoperator \cite{arndt:21, flindt:19}
\begin{equation}
\mathcal{L}(s) = \mathcal{L} + s\mathcal{O} 
\end{equation}
with $s\in \mathbb{C}$.  For the solvability of the model, the observable
$\mathcal{O}$ should again be at most of second order in the bosonic
operators $a_i$ and $a_i^\dag$.

A common example of an observable is the photon current $\mathcal{I}$. The
emission of a single photon is given by $\gamma a\rho a^\dag$ with $\gamma$
the emission rate.  Correspondingly, the photon current is given by
$\mathcal{I} = \gamma a_+ a^\dag_-$ \cite{arndt:21, padurariu:12, arndt:19}. Note that
the fact that the annihilation (creation) operator has the subscript $+$ ($-$)
makes sure that the operators are normal-ordered after taking the trace.
This is the conventional prescription for photon counting, see~\cite{mandel_wolf}.

We can obtain the generating function by diagonalizing
$\mathcal{L}(s)$ which yields
\begin{equation}
\mathcal{L}(s) = \sum_i  \lambda_i(s) v_i(s) u_i(s) +  \half\sum_i \lambda_i(s) + L_0.
\end{equation}
Note that in general, the counting field introduces a block $C(s)$ in the
upper-left corner of $L$ of (\ref{eq:liouvillian}). This is due to the fact
that the source term violates the trace preservation of the time evolution. In
particular, trace preservation only demands that $C(0 ) =0$. Because of this,
the eigenvalues $\lambda_i(s)$ and the operators $v_i(s), u_i(s)$ are chosen
such that they are analytical functions of  $s$. This remark is important as
at finite $s$ trace preservation cannot be used anymore to select the correct
eigenvalue among each of the $\pm \lambda_i$ pairs. Note that for this
case, it is necessary that we use the method based on theorem~\ref{theorem1}
as the alternative approach of \cite{prosen:10} relies on $C(s)=0$.

The factorial cumulant generating function $\mathcal{G}(s)$ for a measurement
time $\tau$ is defined by
\begin{equation}
	e^{\mathcal{G}(s) \tau} =  \bra{0} e^{\mathcal{L}(s) \tau} \ket{0} =
	\Tr\bigl(e^{\mathcal{L}(s) \tau} \rho_s \bigr) = \langle
	e^{ (\mathcal{L} + s \mathcal{O}) \tau}\rangle\,,
\end{equation}
where $\langle \cdot \rangle$ denotes the expectation value in the stationary
state.  The derivatives with respect to $s$ evaluated at $s=0$ produce the
factorial cumulants \cite{flindt:19,beenakker:98,flindt:11}
\begin{equation}
\langle \!\langle \mathcal{O}^k \rangle \!\rangle_F = \langle\! \langle
\mathcal{O}(\mathcal{O}-1)\cdots(\mathcal{O}-k+1)
\rangle \!\rangle = \left.\frac{d^k}{d s^k} \mathcal{G}(s) \right\vert_{s=0}.
\end{equation}
Note that the conventional cumulants can be obtained by replacing $s \mapsto
\exp(i\chi)-1$ and taking the derivative with respect to $i\chi$
\cite{flindt:11}. In the limit of long measurement time with $-\Re(\lambda_i)
\tau \gg 1$, only the term with $n_i=0$  contributes in (\ref{eq:time_evolv}).
The factorial cumulant generating function is then given by
\begin{equation}
  \mathcal{G}(s) =
\bra{0} \mathcal{L}(s) \ket{0}= \half \sum_i \lambda_i(s) + L_0 
= \half \sum_i \bigl[\lambda_i(s) -\lambda_i(0) \bigr]\,.
\end{equation}

\section{Symmetries}

\begin{figure}
  \centering
  \includegraphics[scale=0.7]{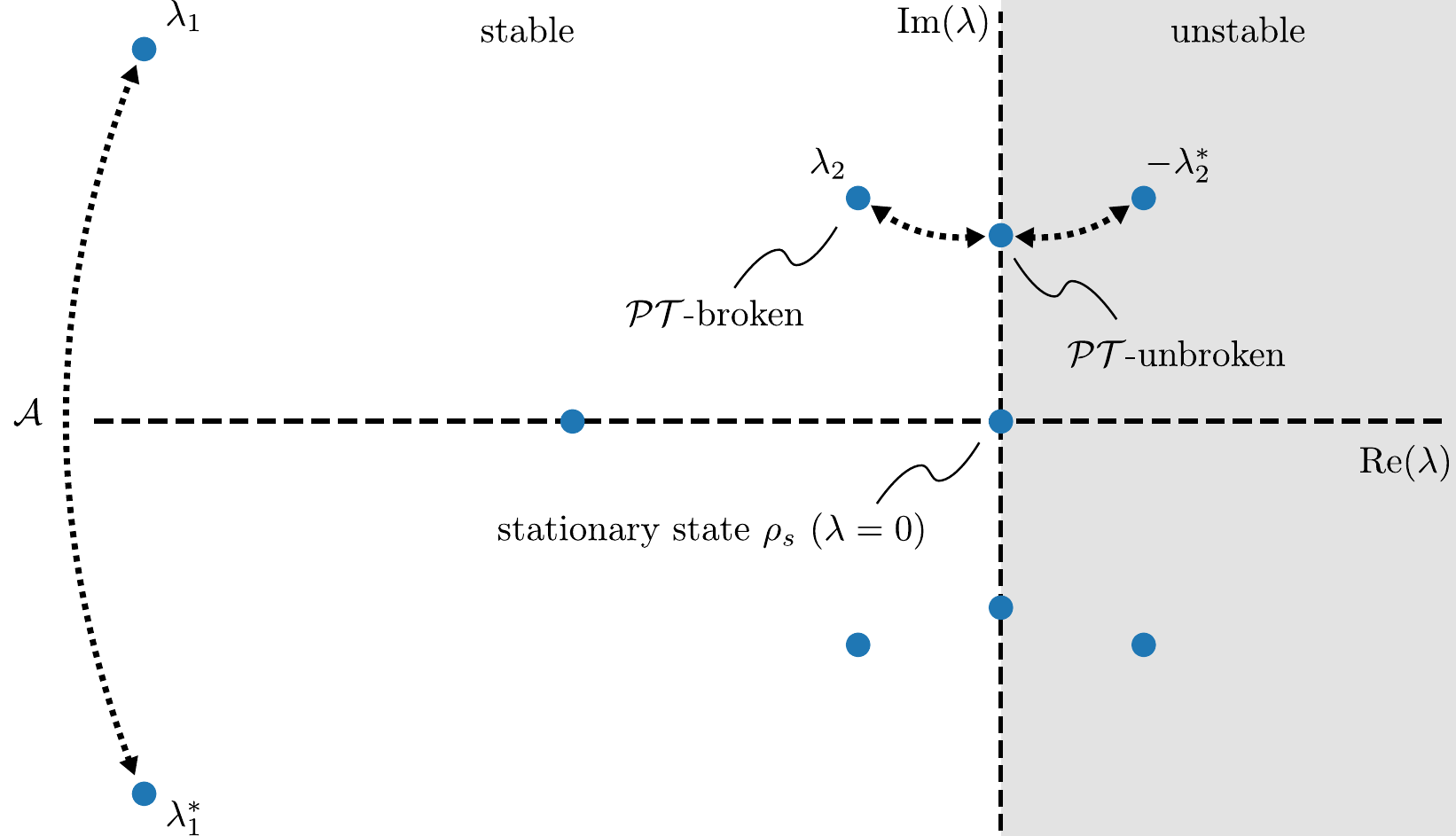}
  \caption{%
Sketch of the spectrum of a Liouvillian showing its key characteristics. The
eigenvector $\ket{0}$ to the eigenvalue $0$ corresponds to the stationary
state $\rho_s$. States with   eigenvalue $\Re(\lambda) < 0$ decay
exponentially and relax to the stationary state. Eigenvalues with
$\Re(\lambda) > 0$ (grey shaded region) indicate that the system is unstable
and the time-evolution $e^{\mathcal{L}t}$ diverges exponentially.  Due to the
hermitcity of the density-matrix $\rho(t)$ encoded in the adjoint operator
$\mathcal{A}$ with $[\mathcal{A}, \mathcal{L}]=0$, the eigenvalues of the
Liouvillian come in complex conjugate pairs. Thus, the spectrum is mirrored
along the real axis.  Similarly, for a $\mathcal{PT}$-`symmetry', the eigenvalues
come in pairs mirrored along the imaginary axis. If all the eigenvalues are on
the imaginary axis (corresponding to unitary dynamics), the
$\mathcal{PT}$-`symmetry' is called unbroken. In the opposite case with a broken
$\mathcal{PT}$-`symmetry', the system is unstable as one of the pair of
eigenvalues has a positive real part.
}\label{fig:spectrum}
\end{figure}

Symmetries are a powerful tool to get deeper insights.  For Hamiltonian
systems, the presence of a symmetry demands that the Hamiltonian $H$ commutes
with a  symmetry operator $\Omega$.  First attempts to generalize  symmetries
to the Lindblad equation have been studied in~\cite{albert:14, buca:12}.  In
the following, we approach this problem by using the superoperator formalism
and the structure of the Lindblad equation that has been introduced above.

\subsection{Hermiticity of the density matrix}

An admissible quantum state $\rho$ is Hermitian with  $\rho = \rho^\dag$. In
particular, this property has to be preserved under time-evolution. Note that
hermiticity is not a real symmetry as it cannot be broken. Still, as we will
see below, it introduces a structure that is akin to a real symmetry. As such,
it is instructive to first discuss the restriction that the  preservation of
hermiticity imposes on the Liouvillian $\mathcal{L}$.

In an orthonormal basis $\ket{n} = \ket{n_1, n_2, \dots}$, the state $\rho$
is represented by
\begin{equation}
\rho = \sum_{m,n} \rho_{mn}\vert m\rangle\langle n \vert 
\end{equation}
with $\rho_{mn} = \rho_{nm}^*$ due to hermiticity.  The superoperator
formalism  corresponds to a vectorization of the density matrix.  The
vectorized density matrix  reads
\begin{equation}
\vert \rho \rangle = \sum_{m,n} \rho_{mn} \ket{ m, n }
\end{equation}
where $\ket{m,n}= \ket{m}\otimes\ket{n}$, the factors corresponding to the $+$
and $-$ sectors of $\mathcal{F}^2$.

We want to find an operation $\mathcal{A}$ that corresponds to taking the
conventional adjoint $\rho^\dag$ of the density-matrix $\rho$.  This adjoint operation $\mathcal{A}$ is an anti-linear operator with $\mathcal{A}\,
z \ket{m,n} = z^* \ket{n,m}$. In the product basis, it is given by
\begin{equation}
\mathcal{A} = \sum_{m,n} \ket{n,m} \bra{m,n} K
\end{equation}
where $K$ is complex conjugation and $\mathcal{A}^2 = I_{\mathcal{F}^2}$. The hermiticity of
$\rho$ then corresponds to the relation $\mathcal{A} \ket{\rho} = \ket{\rho}$
on the vectorized density matrix.  The fact that hermiticity is preserved under
time-evolution demands that $[\mathcal{A}, \mathcal{L}] = 0$.

To see the restriction that the hermiticity puts upon the Liouvillian, we also
investigate the action of $\mathcal{A}$ on the superoperators.  This means
that (for notational simplicity, we suppress the subscript $i$ in the following)
\begin{equation}
	\mathcal{A}\ket{m,n}
	=  \frac{(\mathcal{A} a_+^\dag \mathcal{A})^m}
	{\sqrt{m!}}\frac{(\mathcal{A}a_-\mathcal{A})^n}{\sqrt{n!}}   \mathcal{A}\ket{0} 
	\stackrel{!}=  \frac{(a_-)^m}{\sqrt{m!}}\frac{(a_+^\dag)^n}{\sqrt{n!}} \ket{0}
 = \ket{n, m}.
\end{equation}
The stationary state $\ket{0}$ is Hermitian. Thus, we have $\mathcal{A}\ket{0}
= \ket{0}$ from which $\mathcal{A} a_+^\dag \mathcal{A} = a_-$ and
$\mathcal{A}a_-\mathcal{A} = a_+^\dag$ follows.  The action on the rotated
superoperators $a_c$ and $a_q$ is given by $\mathcal{A}a_c \mathcal{A} =
a_c^\dag$ and $\mathcal{A}a_q\mathcal{A} = -a_q^\dag$.  In a first quantized
version in the basis $\bi{b}=(\bi{a}^\pdag_c, \bi{a}_c^\dag, \bi{a}_q^\dag,
-\bi{a}^\pdag_q)^T$, this operation corresponds to 
\begin{equation}
A = \pmatrix{ X & 0 \cr
0 & X } K 
\qquad \mathrm{where} \qquad
X = \pmatrix{ 0 & I_m \cr
I_m& 0 }.
\end{equation}
The hermiticity constraint  requires that $A L A = L$ and thus $X
H_\mathrm{eff}^* X =H_\mathrm{eff}$ and $X N^* X =N$. As the $H_\mathrm{eff}$
and $N$ in (\ref{eq:Heff}) and (\ref{eq:noise}) derive from a Lindblad
equation, they obey this symmetry by construction. In particular, the
symplectic structure is also unchanged under hermiticity with  $A JA =J$.

The fact that the density matrix is hermitian also explains the symmetry under
mirroring along the real line in the spectrum of the Liouvillian, see
figure~\ref{fig:spectrum}. Given a generalized eigenvector $\bi{s}_i$ to the
eigenvalue $\lambda_i$ of the Liouvillian $L \bi s_i = \lambda_i J \bi s_i$,
the vector $\bi s'_i= A \bi s_i$ is an eigenvector to the eigenvalue
$\lambda_i^*$. This fact can be shown by straightforward calculation
\begin{equation}\label{eq:a_act}
L \bi s'_i = L A \bi s_i  = A L \bi s_i =  \lambda_i^* AJA \bi{s}'_i 
 =\lambda_i^* J \bi{s}'_i\,. 
\end{equation}
Due to this, the eigenvalues $\lambda_i$ are either real or come in complex
conjugate pairs. The eigenvectors $\bi s_i$ to the real
eigenvalues can be chosen such as to obey the constraint $A \bi s_i =  \bi
s_i$.

\subsection{Symmetries of  the Lindblad master equation}

For unitary dynamics, a Hamiltonian $\mathcal{H} = \frac12 \bi{b}'^\dag
H\bi{b}'$ in the basis $\bi{b}' =(\bi a, \bi a^\dag)^T$ may contain a symmetry.
This is implemented by a unitary matrix $\Omega$ \footnote{If $\Omega$ is
anti-unitary, we choose $\Omega X K$ as a unitary symmetry (due to
hermiticity).} which transforms $\bi b' \mapsto \Omega \bi b'$. In order to be
a symmetry, it has to fulfill $
\Omega^\dag H \Omega=H $, moreover $ \Omega^\dag  \,iZ\,
\Omega =iZ$, to preserve the commutation relation of the bosons, and $X
\Omega^* X = \Omega$, to keep $b_i^\dag$ the adjoint of $b_i$. 

For any unitary symmetry $\Omega$ of a closed system, we define a symmetry of
the full Liouvillian $L$ by 
\begin{equation}\label{eq:def_omega_t}
	\tilde \Omega^T L \tilde \Omega = L \quad\mathrm{with}\quad	\tilde \Omega = 
	\pmatrix{ \Omega & 0 \cr
	0 & \Omega^*}\,.
\end{equation}
The full symmetry $\tilde \Omega$ is unitary and symplectic with $\tilde
\Omega^T J \tilde \Omega = J$.  It also commutes with the hermiticity $[\tilde
\Omega, A] =0$.  For the blocks of $L$ it implies 
\begin{equation}
	\Omega^\dag H_\mathrm{eff} \Omega = H_\mathrm{eff} 
	\qquad\mathrm{and}\qquad \Omega N \Omega^T=N \,,
\end{equation}
i.e.  the symmetry acts on the effective Hamiltonian $H_\mathrm{eff}$ as the
defining symmetry on $H$.

The requirements  $ \Omega^\dag  \,iZ\, \Omega =iZ$ and $X \Omega^* X =
\Omega$ demand that it is of the form
\begin{equation}
	\Omega = \pmatrix{ P & 0 \cr 0 & P^* }
\end{equation}
with $P$ a unitary matrix. Thus a  symmetry of the Liouvillian is present if
the blocks  of $H_\mathrm{eff}$ and $N$ fulfill
\begin{eqnarray}
	& P^\dag h P =h, \qquad\;\;  P^T \Delta P=\Delta , \qquad P \bm z =
	\bm z, \cr 
	&  P^T U	P =U, \qquad P^\dag W P =W, \qquad  P^T V P^*=V\,.
\end{eqnarray}
We will discuss an
example of a unitary symmetry in section~\ref{ex1}.

\subsection{$\mathcal{PT}$-`symmetry'}

$\mathcal{PT}$-`symmetry' has been proposed as an alternative to the
conventional unitary quantum mechanics \cite{bender:99}. In the context of
open systems, $\mathcal{PT}$-`symmetry' is realized when the absorption and the
emission counteract each other.  If the symmetry is unbroken, we have that all
$\lambda_i$ are purely imaginary.  More generally, the eigenvalues come in
$\lambda$ and $-\lambda^*$ pairs, see  \cite{bender:99} for closed
systems  and figure~\ref{fig:spectrum}.  In our formalism, a
$\mathcal{PT}$-`symmetry' $\Omega$ is realized for an
anti-unitary $\Omega$~\footnote{Similar to above, given a unitary $\Omega$, the
expression $\Omega X K$ is a anti-unitary and fulfills all the conditions of a
$\mathcal{PT}$-`symmetry'.}  with $\Omega^\dag H_\mathrm{eff} \Omega =
H_\mathrm{eff}$, $ \Omega^\dag iZ \Omega = -iZ$, and $X \Omega^* X = \Omega$.
The anti-linearity maps $\lambda \mapsto \lambda^*$ while the reversal of $iZ$
introduces a minus sign. Thus, we  obtain eigenvalues in pairs  $\lambda,
-\lambda^*$. The reversal of $iZ$ corresponds to an exchange of creation and
annihilation operators.  This already signals the balance of gain and loss
\footnote{It has been shown in~\cite{scheel:21} that $\mathcal{PT}$-symmetry
can be alternatively achieved by a time-dependent modulation of the loss rate without
the need of gain.},
see also below.

Similarly to  above, we define for any anti-unitary $\Omega$ that the
Liouvillian is $\mathcal{PT}$-`symmetric' if
\begin{equation}\label{eq:pt}
\tilde \Omega^T L \tilde\Omega
= L \quad\mathrm{with}\quad	\tilde \Omega = 
	\pmatrix{ \Omega & 0 \cr
	0 & -\Omega}\,.
\end{equation}
Note that $\Omega^T J \Omega = - J$ (interchange of the creation and the
annihilation operators) and $[\tilde \Omega, A] =0$.  The blocks of $\Omega$
are fixed as
\begin{equation}
	\Omega = \pmatrix{ P & 0 \cr
	0 & P} K
\end{equation}
with a real orthogonal and symmetric matrix $P=P^T$.  This connects to the 
conventional definition of $\mathcal{PT}$-`symmetry', where the time reversal is
implemented by $K$ and the parity operator by a unitary $P$.  The action on
the individual parts of the blocks in $H_\mathrm{eff}$ and $N$ are given by
\begin{eqnarray}
	&P h^* P=h, \qquad\;\;  P \Delta^* P = \Delta,\qquad P \bm z
	=-\bm z^*, \cr 
	&P U P =U^\dag, \qquad P W P=V \,.
\end{eqnarray}
Therefore, systems exhibiting a $\mathcal{PT}$-`symmetry' can be identified by
two properties. First, the Hamiltonian of the system is
$\mathcal{PT}$-`symmetric' by itself.  Second, the dissipative, non-unitary part
of the Lindblad equation is invariant for an interchange of the gain and loss
processes. Note that our definition of $\mathcal{PT}$-`symmetry' (\ref{eq:pt})
is a generalization of the one proposed in~\cite{nakanishi:22, huber:20}. In
the case that the $\mathcal{PT}$-`symmetry' is unbroken with $\Re(\lambda_i) =
0$, the eigenstates can be chosen such that $\tilde \Omega \bi s_i = \bi s_i
$. In the case of a broken symmetry, acting with $\tilde \Omega$ maps an
eigenstate with eigenvalue $\lambda_i$ to its partner with eigenvalue
$-\lambda_i^*$;  the proof follows equivalent to (\ref{eq:a_act}).  For an
example of a $\mathcal{PT}$-`symmetric' system, see section~\ref{ex2}.

\section{Applications}
\subsection{Detuned parametric oscillator with Jordan decomposition}\label{ex1}

As a first application of our general method, we present an example for a
1-boson system where Jordan blocks appear.  The Hamiltonian is given by
\begin{equation}
	\mathcal{H} = \frac{\Delta}{2} a^\dag a + \frac{i\epsilon}{4} \left({a^\dag}^2 - a^2\right).
\end{equation}
It  has a physical realization as the degenerate parametric oscillator in the
rotating frame with detuning $\Delta$ and driving strength $\epsilon>0$.  The
coupling to the environment, with the thermal Bose-Einstein occupation
$\bar{n}$, can be modeled by single photon loss with rate $\gamma$ given by the jump
operators
\begin{equation}
	\mathcal{J}_1 = \sqrt{\gamma ( \bar{n} +1)}\, a, \qquad \mathcal{J}_2
	= \sqrt{\gamma \bar{n}} \,a^\dag.
\end{equation}
The effective Hamiltonian and noise matrix read
\begin{equation}
H_\mathrm{eff} = \frac{1}{2} \pmatrix{\Delta - i \gamma & i\epsilon \cr
-i\epsilon & \Delta + i \gamma},
\quad
N = \pmatrix{ 0 & \gamma(2\bar{n}+1) \cr
\gamma(2\bar{n}+1) & 0},
\end{equation}
while $\bi z =0$.  For $\epsilon = \Delta$ the Liouvillian matrix $L$ is not
diagonalizable and therefore the Jordan decomposition is needed
\cite{kamenev:23}.  The symplectic transition matrix
\begin{equation}
S = \pmatrix{ 1 & -i  & \frac{(\epsilon + i \gamma)(2 \bar{n} +1)}{\gamma} & \frac{(\epsilon^2 + \gamma^2)(2
	\bar{n} + 1)}{\gamma^2}  \cr
  0 & 1  & \frac{(i \epsilon + \gamma)(2 \bar{n} +1)}{\gamma} & \frac{\epsilon(i \epsilon + \gamma)(2 \bar{n} +
1)}{\gamma^2} \cr
0 & 0 & 1& 0 \cr
0 & 0 & i& 1
}.
\end{equation}
brings the Liouvillian  onto the form $\mathcal{L} = -\frac{\gamma}{2} (v_1u_1
+ v_2u_2) + \frac{\epsilon}{2} v_2u_1$.  Note that the transition matrix
$S$ fulfills the condition of trace preservation, meaning that the lower-left
block vanishes. The eigenstates are given by $\ket{0,n_2}$ with the
eigenvalues $\lambda = - \frac\gamma2 n_2$. The time evolution is given by
(\ref{eq:time_evolv_jordan}) with $\mu = -\frac\gamma2$ and $\nu =
\frac\epsilon2$.

For this model, we identify two symmetries.  First, the model has a
U(1)-symmetry at $\epsilon=0$ which corresponds to the bosonic number
conservation.  This symmetry is implemented by $P= e^{i\phi}$ which corresponds to
\begin{equation}
\Omega = \pmatrix{ e^{i\phi} & 0 \cr
0 & e^{-i\phi}}, \qquad \phi \in \mathbb{R}\,.
\end{equation}
For parametric driving with $\epsilon > 0$, the U(1)-symmetry is broken down  to
a $\mathbb{Z}_2$-symmetry ($\phi = 0,\pi$) with the remaining symmetry
operations $\Omega = \pm I_{2}$. The latter symmetry is always present when
$\bm z=0$.  It corresponds to the conservation of the parity of the number of
bosons.

\subsection{$\mathcal{PT}$-`symmetry' in coupled harmonic oscillators}\label{ex2}

We discuss an example for a Liouvillian $\mathcal{PT}$-`symmetric' system
for two bosonic modes ($m=2$).  The Hamiltonian
is given by, see \cite{nakanishi:22, roccati:21},
\begin{equation}
\mathcal{H} = \frac{g}{2}(ab^\dag + a^\dag b).
\end{equation}
It describes two coupled oscillators exchanging excitations with a coupling strength $g$.
The dissipation is modelled by 
\begin{equation}
	\mathcal{J}_1 = \sqrt{\gamma_l} \, a, \qquad \mathcal{J}_2 =
	\sqrt{\gamma_g} \, b^\dag\,,
\end{equation}
where $\gamma_l$ describes the loss rate of mode $a$ and $\gamma_g$  the
 gain rate of the  mode $b$.  The relevant matrices are given by
\begin{equation}
  H_\mathrm{eff}  = 
  \pmatrix{h_\mathrm{eff} & 0 \cr
  0 & h_\mathrm{eff}^*}, \qquad N =   \pmatrix{0 & n \cr
  n& 0}
\end{equation}
with
\begin{equation}
h_\mathrm{eff} =  \frac12   \pmatrix{-i\gamma_l  & g  \cr
g & i \gamma_g }, \qquad n = \pmatrix{\gamma_l & 0 \cr
0 &  \gamma_g }\,.
\end{equation}
At the point $\gamma_l = \gamma_g = \gamma$ where the gain matches the loss,
the system features a $\mathcal{PT}$-`symmetry' with $P=X$. The
corresponding anti-unitary symmetry matrix is given by
\begin{equation}
	\Omega = \pmatrix{ X & 0 \cr 0 & X } K\,.
\end{equation}
At the
$\mathcal{PT}$-`symmetric' point, the spectrum is given by $\lambda = \pm i
\sqrt{g^2 - \gamma^2}/2$. For $\gamma<|g|$, the $\mathcal{PT}$-`symmetry' is
unbroken. Still, due to the Jordan blocks, the time-evolution grows linearly
in time, see (\ref{eq:time_evolv_jordan}). As a result, the system is
(polynomially) unstable~\cite{albert:14}. For larger dissipation $\gamma >
|g|$, the $\mathcal{PT}$-`symmetry' is broken and the time-evolution diverges
exponentially.

\section{Conclusion}

In conclusion, we have shown how to diagonalize a complex symmetric matrix by
a symplectic transformation.  Using the superoperator formalism, we have
mapped the diagonalization of the Lindblad master equation onto this problem.
We have shown that the trace preservation of the Liouvillian is a crucial
ingredient for the method of `third quantization', allowing to define a Fock
space for the spectrum of the Liouvillian.  Effective non-Hermitian
Hamiltonians appear naturally in the approach as the off-diagonal blocks of
the Liouvillian.  They contain the details about the interaction of the
system with the environment as well as the (symplectic) eigenvalues.  An
additional noise matrix has been identified which only depends on the
dissipation.  It does not influence the spectrum and is only important for the
form of the eigenvectors, see \cite{prosen:10,clerk:23}.  

We have discussed the `diagonalization' of exceptional points for the simplest
case of $m=1$ boson.  In particular, we have given an explicit expression for
the time evolution operator.  We have shown how to include counting fields
(source terms) in the description, which allow to efficiently calculate   the
(factorial) cumulant generating function in the long-time limit.  Furthermore,
we have discussed symmetries of the Lindblad equation.  We have expressed the
hermiticity requirement as a `symmetry'. It explains the fact that the
eigenvalues are either real or appear in complex conjugate pairs.  It also
imposes a restriction on the Liouvillian, on observables, as well as on other
symmetries.  We have shown how symmetries of a closed system can be elevated
to a symmetry of the full Liouvillian. Moreover, we have investigated
$\mathcal{PT}$-`symmetries' which express the balance of gain and loss and
therefore describes systems on the verge of instability.  

\ack We acknowledge Yuli Nazarov for the initial motivation for the project and
fruitful discussions with Lisa Arndt. This work was supported by the Deutsche
Forschungsgemeinschaft (DFG) under Grant No.~HA 7084/8-1.

\appendix

\section*{Appendix}

As the symplectic diagonalization is not standard, we provide a basic Python3
script that, given a symmetric matrix $L$, returns the symplectic
transformation $S$ as well as the eigenvalues $\lambda$.
\begin{lstlisting}[language=Python,frame=single]
import scipy.linalg as la
import numpy as np

# Algorithm to transform L in the symplectic normal form
#
# Parameters: L :         symmetric matrix of even dimension
# Returns:    (lamb, S) : symplectic eigenvalues lamb 
#                          and transfer matrix S
#                          with S.T @ L @ S = [[0,lamb], [lamb,0]] 
#                          and S.T @ J @ S = J
def sympl_normal_form(L):
    n = L.shape[0] // 2   # dim of block 
    # standard symplectic form
    J = np.kron([[0,1], [-1,0]], np.identity(n))       
    d, Q = la.eig(L, J)   # solve the gen. eigenvalue problem
    # ind. of reord. eigenvalues (put pairs at correct place)
    ind = np.argsort(np.abs(d))[np.r_[:2*n:2, 1:2*n:2]] 
    d = d[ind]            # reorder the eigenvalues
    Q = Q[::,ind]         # reorder the eigenvectors
    T = Q @ J @ Q.T @ J   # calculate T (of theorem 1)
    B = la.sqrtm(T)       # calculate B (of theorem 1)
    return (d[:n], la.solve(B, Q))
\end{lstlisting}
Note that the program finds the pairs $\pm \lambda_i$ only by looking at the
absolute value of $\lambda_i$ and does not fix the normal form.  The sorting
works as long as there is no accidental degeneracy in $|\lambda_i|$. In the
case of $\mathcal{PT}$-`symmetry', when this assumption is not fulfilled, the
reordering part of the algorithm has to be adjusted.

\section*{Bibliography}

\end{document}